\theoremstyle{definition}
\newtheorem{theorem}{Theorem}[section]
\lstdefinestyle{mystyle}{
    mathescape,
    escapeinside=||,
    frame=lines,
    numbersep=4pt,
    language=C++,
    commentstyle=\color{codegreen},
    keywordstyle=\bfseries\color{blue},
    stringstyle=\color{codepurple},
    basicstyle=\ttfamily\scriptsize,
    breakatwhitespace=false,         
    breaklines=true,                 
    captionpos=b,                    
    keepspaces=true,                 
    numbers=left,                    
    numbersep=5pt,                  
    showspaces=false,                
    showstringspaces=false,
    showtabs=false,                  
    tabsize=2
}
\begin{document}

\title{GPA: A GPU Performance Advisor Based on Instruction Sampling}

\author{Keren Zhou}
\email{keren.zhou@rice.edu}
\affiliation{%
  \institution{Rice University}
  \city{Houston}
  \state{Texas}
  \country{United States}}

\author{Xiaozhu Meng}
\email{xm13@rice.edu}
\affiliation{%
  \institution{Rice University}
  \city{Houston}
  \state{Texas}
  \country{United States}}

\author{Ryuichi Sai}
\email{ryuichi@rice.edu}
\affiliation{%
  \institution{Rice University}
  \city{Houston}
  \state{Texas}
  \country{United States}}

\author{John Mellor-Crummey}
\email{johnmc@rice.edu}
\affiliation{%
  \institution{Rice University}
  \city{Houston}
  \state{Texas}
  \country{United States}}
\begin{abstract}
Developing efficient GPU kernels can be difficult because of the complexity of 
GPU architectures and programming models.
Existing performance tools only provide coarse-grained suggestions at the kernel level, if any.
In this paper, we describe GPA, a performance advisor for NVIDIA GPUs that suggests potential code optimization opportunities at a hierarchy of levels, including individual lines, loops, and functions.
To relieve users of the burden of interpreting performance counters and analyzing bottlenecks, GPA uses data flow analysis to approximately attribute measured instruction stalls to their root causes and uses information about a program's structure and the GPU to match inefficiency patterns with suggestions for optimization.
To quantify each suggestion's potential benefits, we developed PC sampling-based performance models to estimate its speedup.
Our experiments with benchmarks and applications show that GPA provides an insightful report to guide performance optimization.
Using GPA, we obtained speedups on a Volta V100 GPU ranging from 1.01$\times$ to 3.53$\times$, with a geometric mean of 1.22$\times$.
\end{abstract}

\maketitle

\section{Introduction}

Graphics Processing Units (GPUs) have been extensively employed in data centers and supercomputers as a building block to accelerate High-Performance Computing (HPC) and machine learning applications.
However, fully utilizing the compute power of GPUs is challenging.
Tuning GPU code to achieve the maximum possible performance requires significant manual effort to cope with the complexity of GPU architectural 
features and programming models.

GPU profilers~\cite{nvprof, nsightsystem, nsightcompute, hpctoolkit, shende2006tau, scorep, january2015allinea} are widely used for measuring GPU-accelerated applications.
While these tools identify hot GPU code, they lack sophisticated analysis of performance bottlenecks and provide little insight into how to improve the code.
nvprof and Nsight-Compute, for example, analyze performance measurement data and propose suggestions on the kernel level but do not identify specific 
lines that could be optimized nor estimate the potential gain after applying optimizations.
As a result, even with GPU profilers, diagnosing and fixing performance problems requires expertise in interpreting measurement data and associating suggestions with corresponding bottlenecks.

Prior tools on GPUs~\cite{shen2018cudaadvisor, gvprof, braun2019cuda} provide fine-grained suggestions using instrumentation-based methods to quantify the severity of performance problems and locate problematic code.
These tools identify one or a few patterns, such as redundant value/address, insufficient cache utilization, or memory transaction burst, but overlook others.
Moreover, they do not correlate execution time with the patterns.
As a result, one may fix specific problems indicated by the tools but not achieve any speedup.

Modern processors support fine-grain measurement using sampling~\cite{dean1997profileme, drongowski2007instruction, guide2011intel, cuptipcsampling}, which can be used to study instruction statistics in applications quantitively.
Unique among GPU vendors, NVIDIA implements PC sampling on its GPUs to sample instructions and associate them with stall reasons.
Existing performance tools~\cite{shende2006tau, hpctoolkit, cudablamer-protools19, nsightcompute, january2015allinea} that utilize PC sampling only associate instruction samples with source lines of GPU code where the stalls occur but lack the ability to derive performance insight based on stall reasons.

To complement the aforementioned approaches, we propose GPA---a GPU performance advisor that suggests effective optimizations for GPU code, and evaluate GPA on a V100 GPU with the Rodinia benchmarks~\cite{che2009rodinia}, several larger application benchmarks, and a combustion application.
Guided by GPA, we improved the performance of the GPU kernels studied by 1.03x to 3.86x. 
This paper describes the design and implementation of GPA which consists of the following key components:
\begin{itemize}
\item An instruction blamer that attributes stalls to instructions that cause them;
\item Performance optimizers that match inefficiency patterns with optimization suggestions for lines, loops, and functions based on program structure, architectural features, measurement data, and control flow;
\item Performance estimators that model GPU execution using instruction samples to estimate speedups for each optimizer.
\end{itemize}

%

This rest of the paper is organized as follows.
Section~\ref{sec:Background and Motivation} reviews PC sampling and instruction format on NVIDIA's GPUs.
Section~\ref{sec:Overview} introduces the workflow of GPA.
Section~\ref{sec:Instruction Blamer} explains the details of GPA's instruction blamer.
Section~\ref{sec:Performance Optimizers and Estimators} describes the implementation of GPA's preformance optimizers and estimators.
Section~\ref{sec:Evaluation} describes the analysis and optimization of GPU kernels using GPA.
Section~\ref{sec:Case Studies} presents case studies of four 
larger codes, including a combustion application.
Section~\ref{sec:Related Work} reviews related work and distinguishes GPA.
Finally, Section~\ref{sec:Conclusion} summarizes our work and outlines our plans for future work.

\section{Background and Motivation}~\label{sec:Background and Motivation}

\begin{table*}[tp]
\caption{Dissection of the fields of ``\texttt{@P0 LDG.32 R0, [R2]}'' instruction.}
\centering
\footnotesize	
\begin{tabular}{|c|c|c|c|c|c|c|c|}
\hline
Wait Mask & Write Barrier & Read Barrier & Predicate   & Opcode & Modifiers & Destination Operands & Source Operands \\ \hline
\texttt{B0}   & \texttt{B1}     &                     & \texttt{P0} & \texttt{LDG}  & \texttt{32}    & \texttt{R0} & \texttt{R2, R3}  \\ \hline
\end{tabular}
\label{tab:instruction}
\end{table*}

In this section, we describe background necessary to understand our work and our motivation for developing GPA. In Section~\ref{subsec:PC Sampling}, we introduce a model of the PC sampling mechanism implemented in recent NVIDIA GPUs.
In Section~\ref{subsec:Instruction Format}, we describe the instruction format used by NVIDIA's GPUs, which is important for instruction dependency analysis.
In Section~\ref{subsec:Motivation Examples}, we show how raw PC sampling measurements are insufficient to provide insight for performance optimization.

\subsection{PC Sampling}~\label{subsec:PC Sampling}
\begin{figure}
\centering
\includegraphics[width=0.9\linewidth]{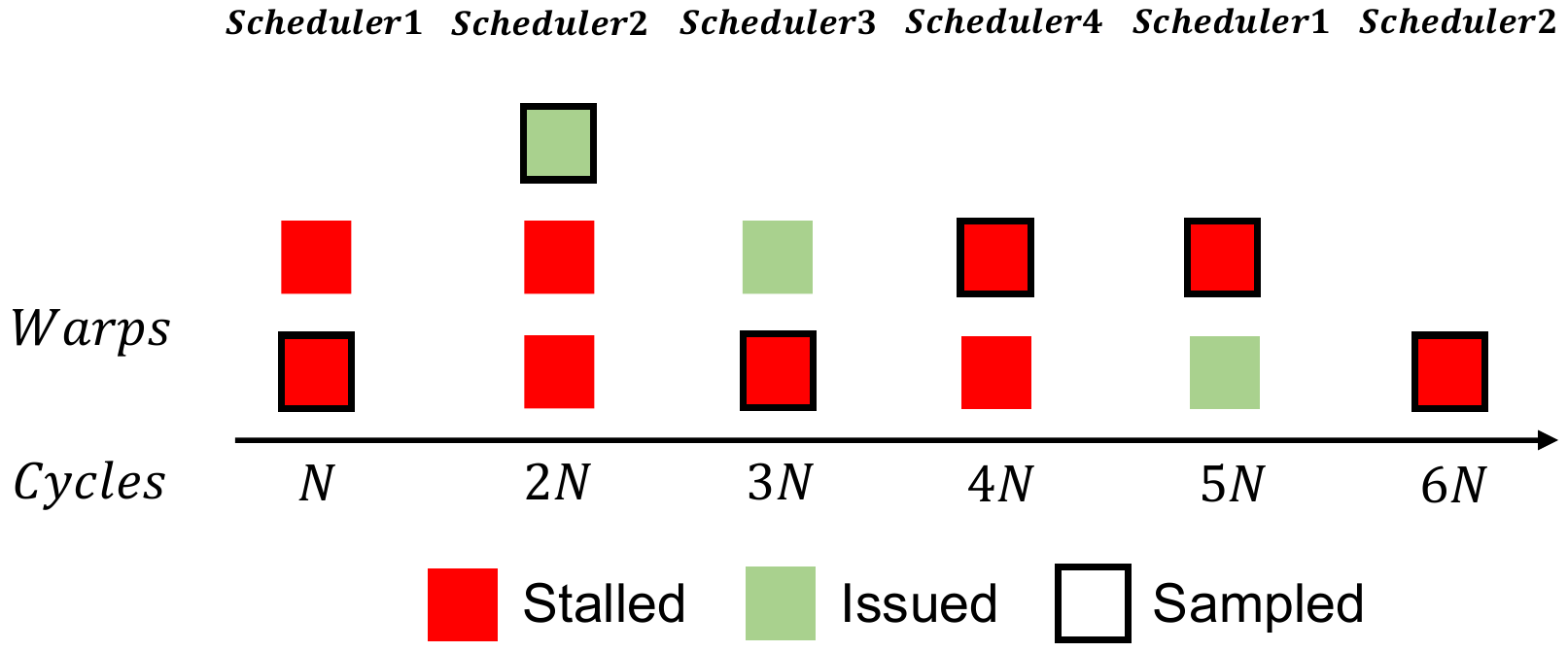}
\caption{A mental model of PC sampling on an SM of NVIDIA's V100 GPU.
Samples are taken every $N$ cycles.
Samples at $N$, 4$N$, and 6$N$ are latency samples, and others are active samples.
Samples at $N$, 3$N$, 4$N$, 5$N$, and 6$N$ are stall samples.}
\label{fig:pc sampling}
\end{figure}

NVIDIA's GPUs implement PC sampling to collect instruction samples.
One can use NVIDIA’s CUPTI API~\cite{cupti} to collect PC samples for GPU-accelerated applications.
Each streaming multi-processor (SM) in an NVIDIA GPU collects samples individually.
When a buffer used to collect samples is full on an SM, CUPTI merges samples from all SMs and transfers the samples to the CPU.

Each SM on an NVIDIA V100 has four warp schedulers, and each warp scheduler is assigned a number of active warps.
At the end of each sampling period, an SM records a sample for one of its warp schedulers and it cycles through its warp schedulers in a round-robin fashion. 
When a warp is sampled, two classes of samples are recorded: an \textit{active sample} when the warp scheduler is issuing an instruction and a \textit{latency sample} when no instruction is issuing.
For the instruction sampled, a stall reason (e.g., waiting for a value from memory) is recorded for the instruction, if any.
Consider Figure~\ref{fig:pc sampling} as an example.
There are 5 samples with a stall reason. We call them \textit{stall samples} or \textit{stalls} in the remaining sections.
Because there are three latency samples and three active samples, we estimate the stall ratio and the active ratio of the SM as $3/6$.
Assuming all SMs on the GPU have a similar workload, we estimate the stall ratio and the active ratio of the GPU kernel as $3/6$.

\subsection{Instruction Format}~\label{subsec:Instruction Format}
A fixed length instruction encoding is used on NVIDIA's GPUs. Pre-Volta GPUs use a 64-bit word for an instruction, but Volta and later architectures use a 128-bit word.
In this paper, we focus on the Volta architecture used in two of the top three supercomputers—Summit and Sierra.

Among the fields of a GPU instruction shown in Table~\ref{tab:instruction},
we focus on the following three key fields:

\begin{itemize}
\item \textbf{Wait Mask and Write/Read Barrier.}
Every GPU instruction has a control code~\cite{jia2018dissecting,zhang2017understanding} field that encodes information to guide the warp scheduler as it issues instructions, including stall cycles, yielding flag, and dependencies.
For each fixed latency instruction (e.g., most arithmetic instructions), the assembler sets stall cycles for the instruction to indicate how long the scheduler should wait before issuing the instruction.
For each variable latency instruction, the assembler associates write/read barrier indices with it, and associates instructions that depend on them a wait mask to create dependencies.

\item \textbf{Predicate}.
If an instruction's predicate field is set, the instruction is executed when the predicate evaluates as true.
There are both true and false predicate conditions: \texttt{Pi} is a true predicate condition, and \texttt{!Pi} is a false predicate condition, where $0 \leq i \leq 6$.
In Table~\ref{tab:instruction}, the \texttt{LDG} instruction is executed if \texttt{P0} is true.

\item \textbf{Opcode, Modifiers, and Operands}.
Each thread can use up to 255 32-bit regular registers ranging from \texttt{R0}-\texttt{R254}.
Opcode and modifiers together determine the length of operands used.
In Table~\ref{tab:instruction}, the \textit{32} modifier indicates each thread reads a 32-bit value from memory.
Moreover, because the data is loaded from global memory, which has a 64-bit address space, the source operand is a 64-bit value comprised of two registers---\texttt{R2} and \texttt{R3}.
\end{itemize}

\subsection{Motivating Examples}~\label{subsec:Motivation Examples}

We refer to a collection of instruction samples and their stall reasons as a raw PC sampling report from which we can measure the stall reasons of a kernel.
However, diagnosing the slowness of the kernel still requires interpretation of the measurement data to answer the following questions. 

\begin{itemize}
\item Which GPU instructions cause stalls?
\item How can we improve the performance by eliminating these stalls?
\item What is the estimated speedup for each potential optimization?
\end{itemize}

To illustrate the importance of analyzing stall reasons and associating them with optimizations, we analyze the \textit{hotspot} and the \textit{b+tree} examples in Rodinia benchmark. 

\begin{figure}[htbp]
\centering
\begin{lstlisting}[language=C++, caption={A hot loop in the \textit{hotspot} example}, label={lst:hotspot}]
for (int i = 0; i < iteration; i++) {
  temp_t[ty][tx] = | $\label{line:hotspot execution}$ |
    temp_on_cuda[ty][tx] + step_div_Cap * (
    power_on_cuda[ty][tx] + (temp_on_cuda[S][tx] +
    temp_on_cuda[N][tx] - 2.0 * temp_on_cuda[ty][tx]) *
    ...
}
\end{lstlisting}
\end{figure}

Listing~\ref{lst:hotspot} shows a hot loop of the \textit{hotspot} kernel.
The raw PC sampling report for this kernel indicates large execution latency stalls on Line~\ref{line:hotspot execution}, but it provides little information regarding where the stalls come from and what optimizations apply.
GPA attributes the latency to type conversion instructions that demote a 64-bit float to a 32-bit float.
Though all arrays are composed of 32-bit values, the compiler generates conversion instructions as a float constant multiplies a 32-bit float value.
GPA suggests specifying the type of the constant ($2.0$) as a 32-bit value to avoid conversion.
After applying the optimization, we achieved a 1.14$\times$ speedup.

\begin{figure}[htbp]
\centering
\begin{lstlisting}[language=C++, caption={A hot loop in the \textit{b+tree} example}, label={lst:b+tree}]
for (int i = 0; i < height; i++) {
  if ((knodesD[currKnodeD[bid]].keys[thid] <= startD[bid]) && | $\label{line:b+tree memory}$ |
      (knodesD[currKnodeD[bid]].keys[thid+1] > startD[bid])) 
  ...
  __syncthreads();| $\label{line:b+tree sync}$ |
}
\end{lstlisting}
\end{figure}

Listing~\ref{lst:b+tree} shows a costly loop in the b+tree code.
The raw PC sampling report shows high memory dependency stalls on Line~\ref{line:b+tree memory} but does not propose a suggestion to eliminate the bottleneck.
By analyzing the assembly code, GPA concludes that the distance between the load instructions and the instruction that consumes the loaded values is short.
Therefore, instructions in the path are not enough to hide the latency.
GPA suggests the users separate the subscripted loads from their uses 
by reordering code. We read the address of {\tt knodesD[currKnodeD[bid]].keys} for the next iteration before the synchronization on Line~\ref{line:b+tree sync} and obtained a 1.16$\times$ speedup.

Based on the analysis above, we conclude that pure PC sampling information is insufficient to guide optimizations.
To improve the quality of the analysis report, we analyze instruction dependencies to characterize stalls' causes.
Furthermore, we can associate the stalls with the program's structure to suggest code optimizations, such as loop unrolling, function inlining, and code reordering.

\begin{figure}[tbp]
\centering
\includegraphics[width=0.9\linewidth]{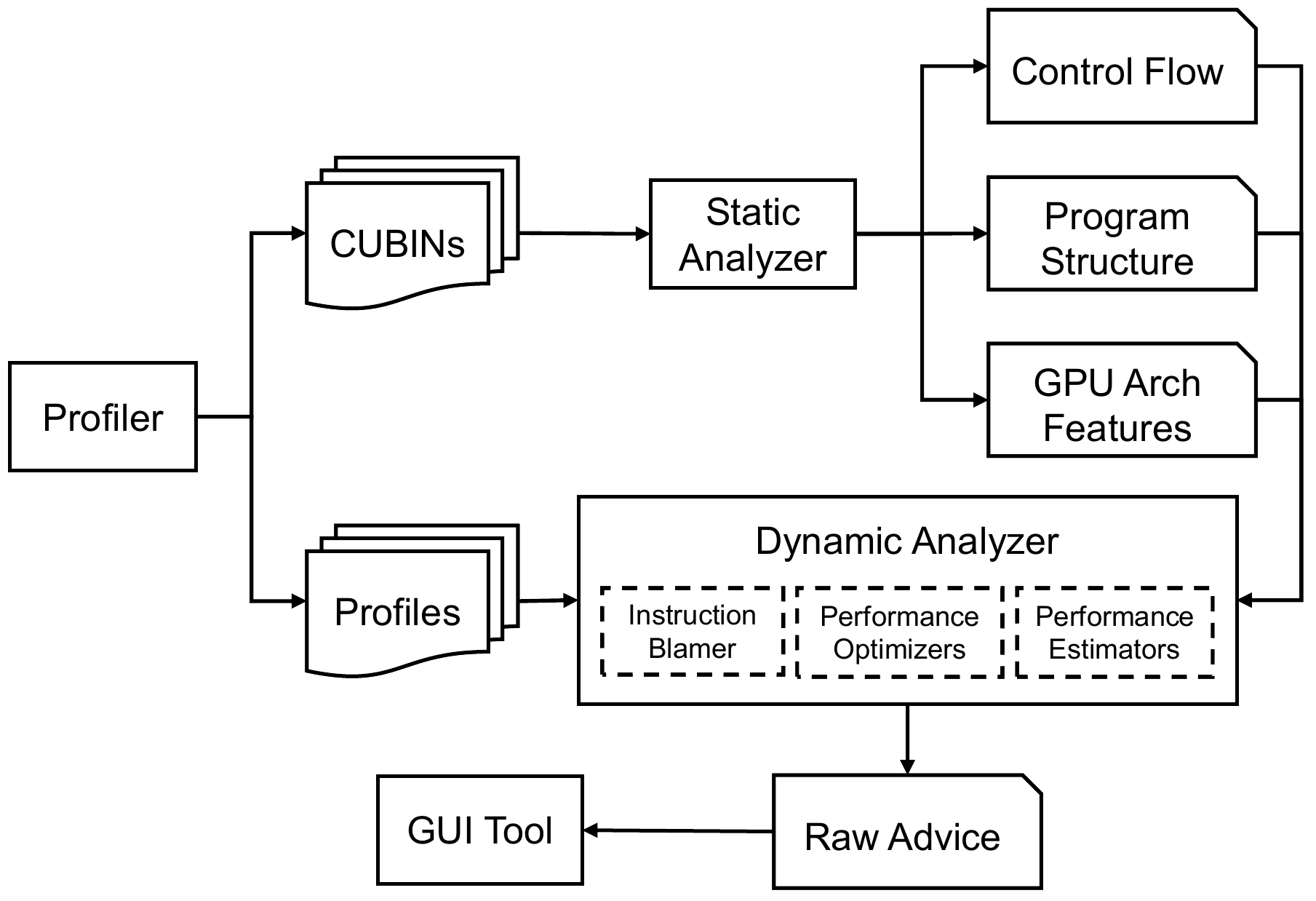}
\caption{Overview of GPA}
\label{fig:framework}
\end{figure}

\section{Overview}~\label{sec:Overview}

Figure~\ref{fig:framework} shows the workflow of GPA.
GPA uses a \textit{profiler} to collect PC samples and kernel launch statistics at runtime and attribute them to the calling context where the kernel is launched.
The profiler dumps the profiles and records CUDA binaries (CUBINs) for offline analysis.
GPA's \textit{static analyzer} analyzes CUBINs to recover static information which is ingested into the \textit{dynamic analyzer} with profiles to generate comprehensive \textit{raw advice}.

\paragraph{Static Analyzer}
In its static analyzer, GPA analyzes CUBINs to recover the following files:
\begin{itemize}
\sloppy
\item \textbf{Control flow graphs}. GPA employs NVIDIA's \texttt{nvdisasm} tool to decode instructions in CUBINs and dump raw control flow graphs.
We modify the raw control flow graphs by splitting super blocks into basic blocks and ingest the modified control flow graphs into Dyninst~\cite{dyninst} to analyze loop nests.
\item \textbf{Program structure}. A program structure file contains functions symbols, inline stacks, loop nests, and source line mappings.
According to each function symbol's visibility field, we annotate global functions and device functions.
We read DWARF information to parse information about inlined functions.
\item \textbf{Architectural features}.
Based on the architecture flag encoded in CUBINs, we fetch specific hardware configurations, such as instruction latencies, warp size, and register limitations for analysis in the later stages.
\end{itemize}

\paragraph{Dynamic Analyzer}

The \textit{dynamic analyzer} is comprised of three components, including an \textit{instruction blamer}, \textit{performance optimizers}, and \textit{performance estimators}.

We analyze each GPU kernel's launch context separately.
For each kernel invocation, the \textit{instruction blamer} uses backward slicing~\cite{cifuentes1997intraprocedural, srinivasan2016improved} to attribute stalls to the responsible instructions.
Based on the stall counts and GPA's static analysis results, each \textit{performance optimizer} attempts to match its optimization strategy to program regions that have high stall samples.
Guided by performance models, \textit{performance estimator}s estimate each optimizer's speedup based on the matched samples.
Finally, GPA generates an advice report that contains suggestions from its top optimizers sorted by their estimated speedups. 

In this paper, we focus on the implementation of GPA's dynamic analyzer, which tackles the following unique challenges: 
(1) It extends the backward slicing algorithm for special fields (e.g., barriers) of a GPU instruction to track dependencies among GPU instructions. 
(2) It attributes stalls to their sources accurately because it incorporates  pruning rules to cut down dependency sources.
(3) Without code annotation, it derives a general performance model to quantify the benefits of each GPU optimizer.

\paragraph{Utilization of GPA}
GPA is a command line tool that automates profiling and analysis stages.
Since GPA uses sampling-based profiles, users do not need to change their program source code.
To provide advice at the source line level, the only change required is adding compiler options to ensure that the compiler includes line mapping information in GPU binaries it generates.
Users apply optimizations according to the raw advice generated by GPA.
Today, GPA produces raw advice as ASCII text; however, its advice could be incorporated into a graphical user interface tool to analyze inefficient code regions and optimization suggestions.

\section{Instruction Blamer}~\label{sec:Instruction Blamer}

CUPTI associates stall reasons~\cite{cuptipcsampling} with instruction samples.
Among the stall reasons, memory dependency, synchronization, and execution dependency stalls are caused by the source instructions rather than the instructions that suffer from stalls.
Other stall reasons, such as memory throttling, are caused by instruction samples with the stall.
To further characterize program bottlenecks with memory dependency, synchronization, and execution dependency stalls, we developed an instruction blamer that attributes stalls to the source instructions.

We first use backward slicing to analyze every instruction's def-use chain in the control flow graph.
According to the def-use chain and measurement data, we build an instruction dependency graph where each node is an instruction, annotated with its stalls, and each edge represents a def-use relation.
Since not all edges cause stalls, we prune edges according to several heuristic rules.
In the end, we apportion the stalls to its incoming edges based on the number of issued instructions and the length of each edge.

\paragraph{Backward slicing}

We target intra function backward slicing~\cite{cifuentes1997intraprocedural} for GPU instructions because instructions in the same function cause most stalls.
We find a stalled instruction's immediate dependency sources because transitive dependencies are unlikely to cause the stalls.
According to Table~\ref{tab:instruction}, several fields of a GPU instruction impact instruction dependencies, including operands, barriers, and predicate.
We can begin with a traditional backward slicing algorithm for CPU instructions to analyze GPU operands, but barriers and predicates need special processing.

\begin{figure}[t]
\centering
\includegraphics[width=0.4\linewidth]{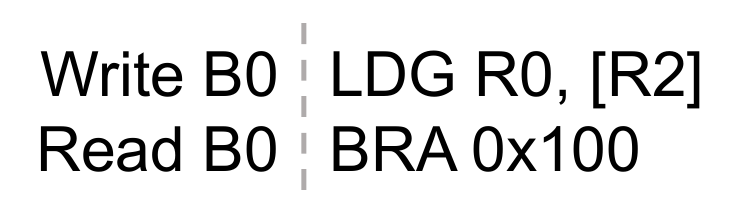}
\caption{An example of barrier register dependency}
\label{fig:barrier}
\end{figure}

\textit{Virtual barrier registers:}
We define six available barrier indices as six virtual barrier registers \texttt{B0}-\texttt{B5}.
A write/read barrier index association can be represented as a write operation to one or more barrier registers.
Likewise, we treat a wait mask association as a read of barrier registers.
In this way, dependencies caused by barrier indices can be identified through def-use chains of the virtual barrier registers.
It is worth noting that barriers can be set even if there is no dependency between regular registers.
Take Figure~\ref{fig:barrier} as an example, the \texttt{LDG} instruction loads a value to \texttt{R0} and writes barrier \texttt{B0}, and the \texttt{BRA} instruction does not consume \texttt{R0} but still reads \texttt{B0}.
Observed memory dependency stalls on the \texttt{BRA} instruction should be attributed to the \texttt{LDG} instruction.

\textit{Predicated instructions:}
Immediate dependency sources are not only the first \textit{def} instruction of each of its operands on the search path.
Consider Figure~\ref{fig:backward slicing} as an example, suppose we observe a stall at the \texttt{IADD} instruction, which does not have a predicate;
because the \texttt{LDG} instruction is executed only if \texttt{P0} is true, it is possible that the stall comes from the \texttt{LDC} instruction earlier in the path, which is executed only if \texttt{P0} is false.
Therefore, the backward slicing search should proceed until the predicates of \textit{def} instructions on the path cover all conditions.

Let $P$ be the union of \textit{def} instructions' predicates on the path.	
$P = \cup p$, where $p \in \{p_{i}\} \cup \{!p_{i}\} \cup \{\_\}$, and $\{ p_{i} \} \cup \{ !p_{i} \}= \{ \_ \}$, for $0\leq i \leq 6$.	
$\_$ is a special predicate that covers both true and false predicates.
An instruction without a predicate has the same semantic as $\_$.
We say $P$ \textit{contains} $p'$ iff $p' \in P$ or $\_ \in P$.
The backward slicing search proceeds until the union of \textit{def} instructions' predicates on the search path ($P$) \textit{contains} the predicate of the \textit{use} instruction ($p'$).

\begin{figure*}[htbp]
\begin{subfigure}{0.5\linewidth}
\centering
\includegraphics[width=0.45\linewidth]{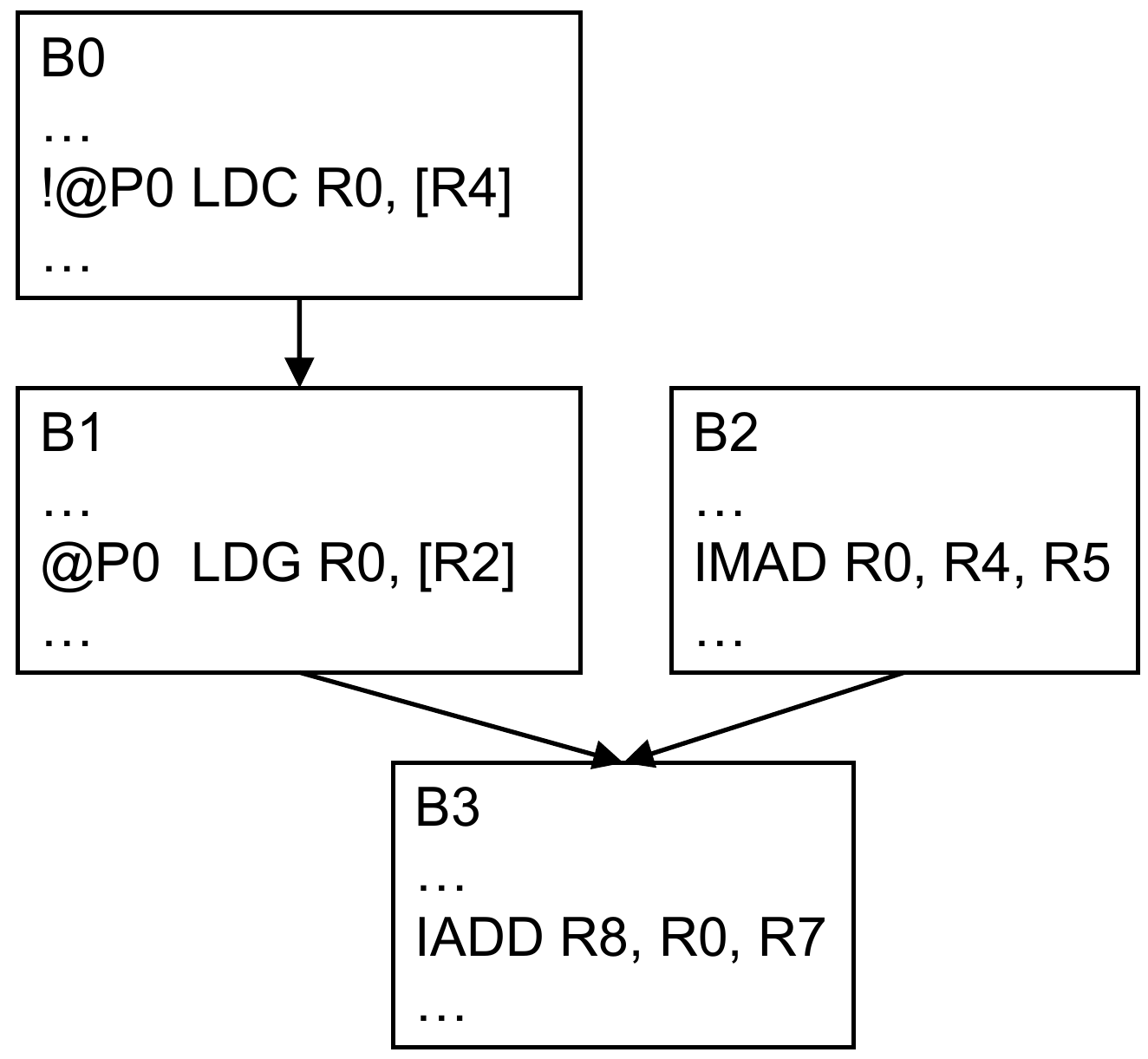}
\caption{Backward slicing}
\label{fig:backward slicing}
\end{subfigure}
~
\begin{subfigure}{0.5\linewidth}
\centering
\includegraphics[width=0.72\linewidth]{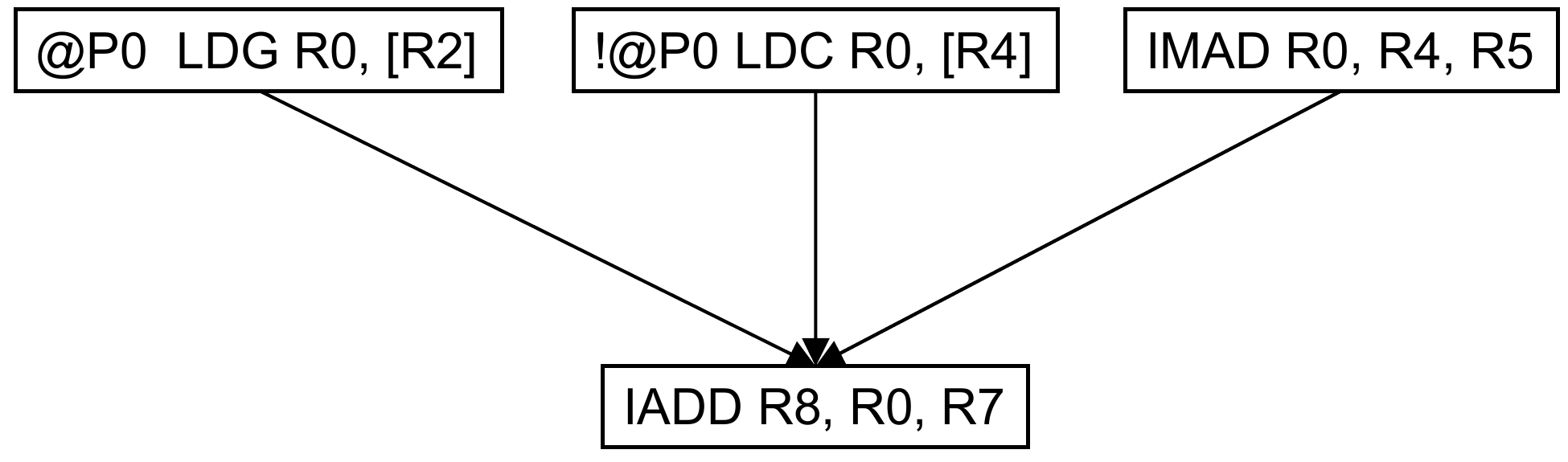}
\caption{Construct a dependency graph}
\label{fig:construct a dependency graph}
\end{subfigure}
\newline
\begin{subfigure}{0.5\linewidth}
\centering
\includegraphics[width=0.66\linewidth]{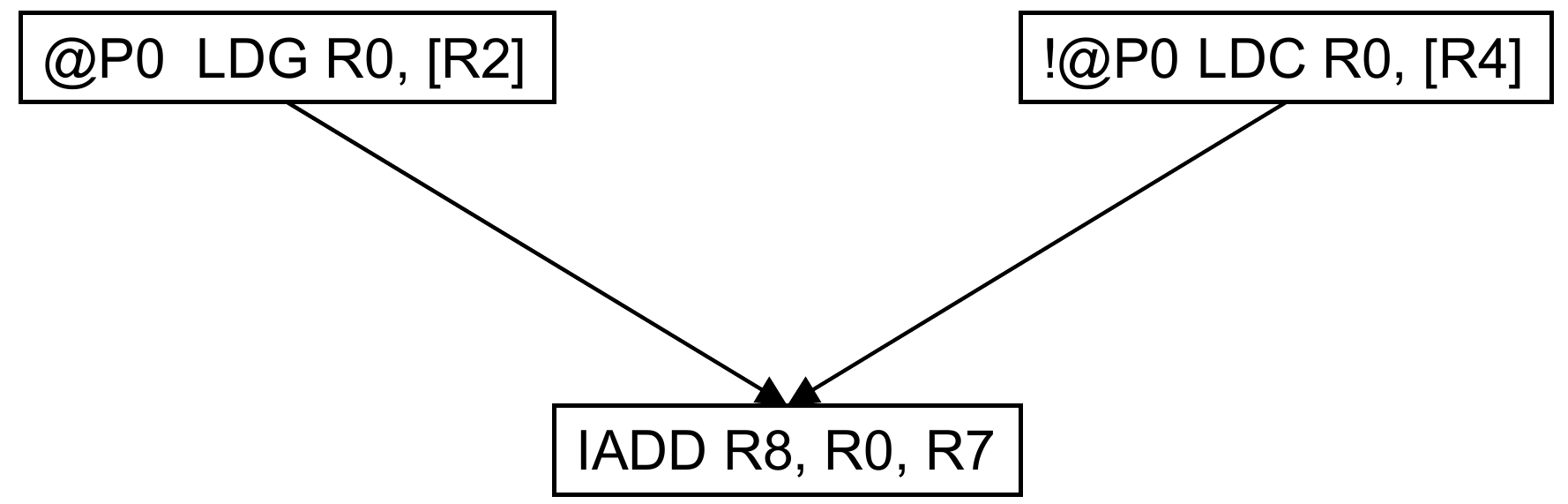}
\caption{Prune cold edges}
\label{fig:prune cold edges}
\end{subfigure}
~
\begin{subfigure}{0.5\linewidth}
\centering
\includegraphics[width=0.66\linewidth]{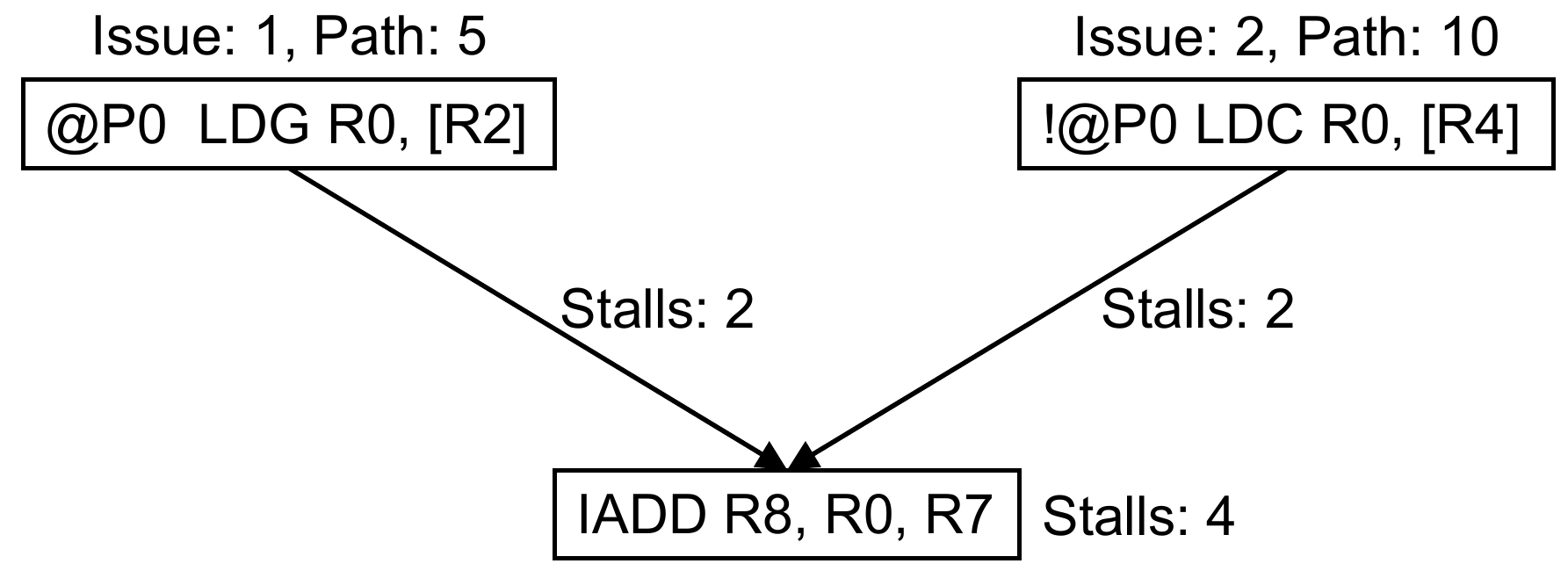}
\caption{Apportion stalls}
\label{fig:apportion stalls}
\end{subfigure}
\caption{Steps to attribute stalls of  the \texttt{IADD} instruction}
\label{fig:instruction blamer}

\vspace{-1ex}
\end{figure*}

\paragraph{Construct a dependency graph} We build an instruction dependency graph from the def-use chains of collected instruction samples.
For simplicity, in Figure~\ref{fig:construct a dependency graph} we only demonstrate memory dependency.
Each node represents an instruction, and each edge represents a def-use relation associated with \texttt{R0}.

\paragraph{Prune cold edges} Not all the dependent edges cause stalls.
If an edge does not trigger stalls, we call it a ``cold edge'' and use the following three rules to prune it.

\begin{enumerate}
\item \textbf{Opcode based pruning}. Memory dependency stalls are attributed to memory instructions only. Synchronization dependency stalls are attributed to synchronization instructions only.
\item \textbf{Dominator based pruning}. For every edge $e$ from node $i$ to $j$ in a dependency graph, we remove $e$ if there is a non-predicate instruction $k$ uses the same operands that $i$ defines and $j$ uses, and $k$ is in every path from $i$ to $j$ in the control flow graph because we would have observed stalls at $k$ rather than $j$ if $i$ caused any stalls.
\item \textbf{Instruction latency based pruning}. For every edge $e$ from node $i$ to $j$ in a dependency graph, we remove $e$ if the number of instructions in every path from $i$ to $j$ in the control flow graph is greater than the $latency$ of $i$. 
\end{enumerate}

For fixed latency instructions, we can use microbenchmarking~\cite{jia2018dissecting} for their latencies; for variable latency instructions, we use their upper bounds for pruning.
For instance, we use the TLB miss latency as the upper bound latency of global memory instructions.

According to the opcode pruning rule, we prune the edge from \texttt{IMAD} to \texttt{IADD} in Figure~\ref{fig:construct a dependency graph} to obtain the dependency graph in Figure~\ref{fig:prune cold edges} because an \texttt{IMAD} instruction cannot cause memory dependency stalls.

\paragraph{Attribute stalls} After pruning cold edges, there are still some nodes that have multiple incoming edges.
To measure the stalls caused by each edge, we use the following two heuristics.
\begin{enumerate}
\item Apportion the stalls based on each incoming node's issued samples. The more the issued samples, the more stalls are blamed to the instruction. 
\item Apportion the stalls based on the number of instructions in paths. The longer the path, the less stalls are blamed on the \textit{def} instruction. If an instruction $i$ has multiple paths to instruction $j$ in a control flow graph, we use the longest one.
\end{enumerate}

Finally, we associate the stalls of each dependency source ($S_{i}$) by apportioning the stalls of the observed instruction ($S_{j}$) using Equation~\ref{eq:apportion stalls}, where $\mathcal{R}_{i}^{issue}$ is the ratio of each incoming node calculated by heuristic (1), and $\mathcal{R}_{i}^{path}$ denotes the ratio of each dependency source $i$ calculated by heuristic (2).

\begin{equation}~\label{eq:apportion stalls}
S_{i} = \frac{\mathcal{R}_{i}^{path} \times \mathcal{R}_{i}^{issue}}{\sum\limits_{k \in incoming(j)}{\mathcal{R}_{k}^{path} \times \mathcal{R}_{k}^{issue}}} \times S_{j}
\end{equation}

Figure~\ref{fig:apportion stalls} shows the apportioned stalls using the above heuristics. While the \texttt{LDC} instruction has twice the issued samples of the \texttt{LDG} instruction, the number of path samples from \texttt{LDC} to \texttt{IADD} is also twice that of \texttt{LDG} to \texttt{IADD}. Thus, we assign each dependency source the same number of samples.

Without loss of generality, the above heuristics and equation also apply for apportioning latency samples.

\begin{figure}[htbp]
\begin{subfigure}{\linewidth}
\centering
\includegraphics[width=0.8\linewidth]{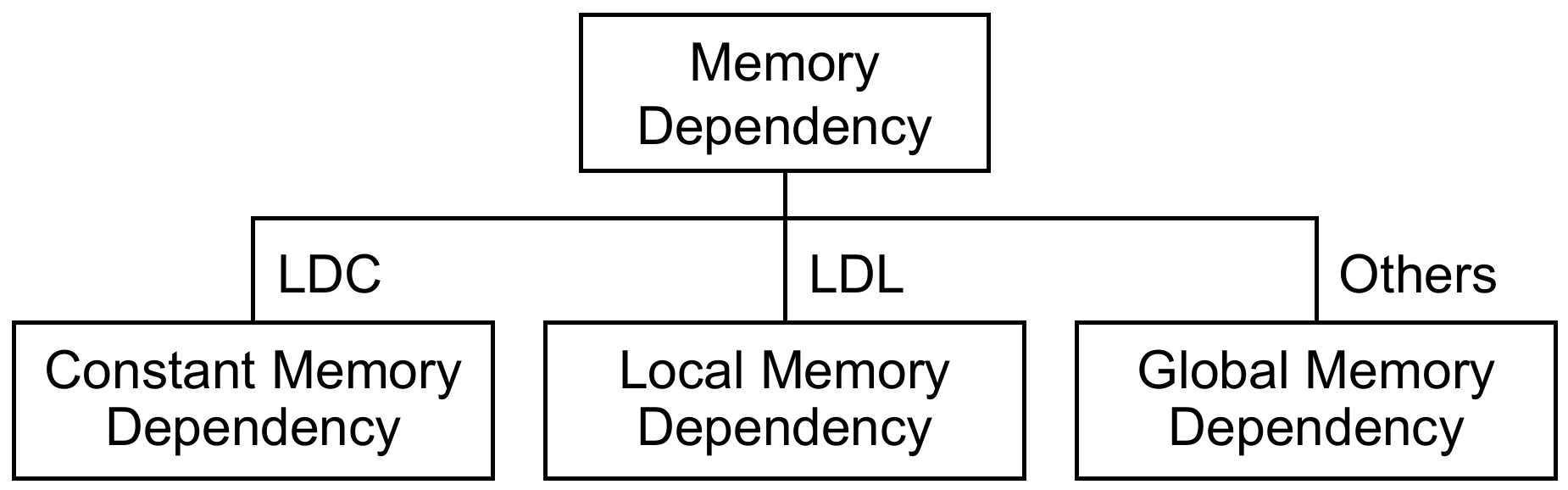}
\caption{Memory dependency}
\end{subfigure}
\begin{subfigure}{\linewidth}
\centering
\includegraphics[width=0.8\linewidth]{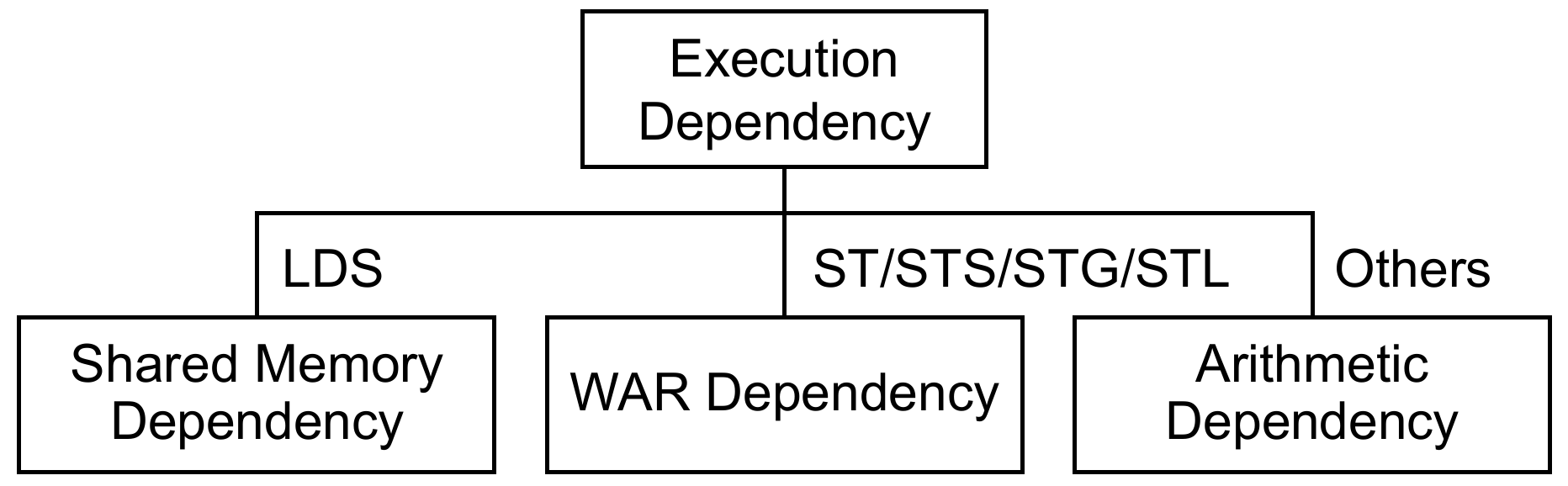}
\caption{Execution dependency}
\end{subfigure}
\caption{Classification of detailed dependency stall reasons}
\label{fig:dependency stalls}
\end{figure}

After attributing stalls to their sources, we further classify the stall reasons for execution and memory dependencies according to the opcode of each source instruction.
As shown in Figure~\ref{fig:dependency stalls}, we categorize memory dependency as local memory, constant memory, and global memory dependencies.
Knowing where local memory stalls occur is important for register pressure analysis because it often indicates register spills.
Likewise, we classify execution dependency as shared memory, arithmetic, and write-after-read (WAR) dependencies.
WAR dependency happens when a variable latency \textit{def} instruction reads a value from a register, and the \textit{use} instruction writes the same register.

\section{Performance Optimizers and Estimators}~\label{sec:Performance Optimizers and Estimators}

This section describes the implementation of performance optimizers and estimators.

\subsection{Performance Optimizers}

Performance optimizers take program structure and the analysis result from the instruction blamer.
Each optimizer encodes rules to calculate matching stalls.
In this way, we lift the job of associating stalls with optimizations from users to the advisor.

\begin{table}[htbp]
\centering
\caption{A brief description of GPU optimizers in GPA.}
\scriptsize
\begin{tabular}{|c|c|c|}
\hline
\multicolumn{3}{|c|}{Code Optimizers}                                                                                                                 \\ \hline
 & \multicolumn{2}{c|}{Stall Elimination}                                                                                                             \\ \hline
 & Register Reuse               & \begin{tabular}[c]{@{}c@{}}Match memory dependency stalls\\ of local memory read/write instructions\end{tabular}    \\ \hline
 & Strength Reduction           & \begin{tabular}[c]{@{}c@{}}Match execution dependency stalls of\\ long latency arithmetic instructions\end{tabular} \\ \hline
 & Function Split               & Match instruction fetch stalls                                                                                      \\ \hline
 & Fast Math                    & Match stalls in CUDA math functions                                                                                 \\ \hline
 & Warp Balance                 & Match warp synchronization stalls                                                                                   \\ \hline
 & Memory Transaction Reduction & Match global memory throttling stalls                                                                               \\ \hline
 & \multicolumn{2}{c|}{Latency Hiding}                                                                                                                \\ \hline
 & Loop Unrolling               & \begin{tabular}[c]{@{}c@{}}Match global memory and execution\\ dependency stalls in loops\end{tabular}              \\ \hline
 & Code Reordering              & \begin{tabular}[c]{@{}c@{}}Match global memory and execution\\ dependency stalls\end{tabular}                       \\ \hline
 & Function Inlining            & \begin{tabular}[c]{@{}c@{}}Match stalls in device functions\\ and their call sites\end{tabular}                     \\ \hline
\multicolumn{3}{|c|}{Parallel Optimizers}                                                                                                             \\ \hline
 & Block Increase               & \begin{tabular}[c]{@{}c@{}}Match if the number of blocks\\ is less than the number of SMs\end{tabular}              \\ \hline
 & Thread Increase              & \begin{tabular}[c]{@{}c@{}}Match if occupancy is limited by\\ the number of threads per block\end{tabular}          \\ \hline
\end{tabular}
\label{tab:optimizers}
\end{table}

We classify the available performance optimizers in GPA in Table~\ref{tab:optimizers}.
At a high level, we have parallel and code optimizers.
Parallel optimizers check if we can increase the parallelism level to improve performance.
For instance, the \textit{Block Increase} optimizer investigates the potential of increasing the number of blocks.
Code optimizers check if we can adjust code to improve the performance.
Based on optimization methods, we further categorize the code optimizers as stall elimination and latency hiding optimizers.
Stall elimination optimizers provide suggestions to reduce stalls; latency hiding optimizers suggest rearranging issue orders to overlap stall latency.

Each optimizer maintains a workflow to match instruction samples.
The \textit{Loop Unrolling} optimizer, for example, iterates through all the latency samples.
It records a latency sample if it has either a memory dependency stall or an execution dependency stall, and the def and the use instructions are within the same loop.
The optimizer suggests using \texttt{pragma unroll} annotation or manual unrolling for loops where the compiler fails to unroll automatically.

\subsection{Performance Estimators}

With performance optimizers, we associate optimization methods with stalls, whereas it is still unclear which methods have a better effect in terms of the given measurement data, program structure, and the underlying GPU architecture.
Performance estimators take the matched stalls as input and estimate the speedups by modeling the GPU's execution.
The optimizers with top estimated speedups output their suggestions to the performance advice report.
According to the categories of optimizers, we classify estimators as code optimization estimators and parallel optimization estimators.

\subsubsection{Code Optimization Estimators}

We first model the effect of the stall elimination optimizers.
Suppose the total of number samples for a GPU kernel is $T$, and the matched samples for an optimizer is $M$.
Stall elimination optimizers assume we at best eliminate all the stalls by modifying the code.
We use Equation~\ref{eq:stall elimination} to estimate the speedup of stall elimination optimizers $\mathcal{S}^{e}$.

\begin{equation}~\label{eq:stall elimination}
\mathcal{S}^{e} = \frac{T}{T - M}
\end{equation}

Latency hiding optimizers suppose we can at best eliminate latency samples by modifying code.
Therefore, we can use Equation~\ref{eq:latency hiding} to estimate the speedup of latency hiding optimizers $\mathcal{S}^{h}$, where $M^{L}$ is the number of matched latency samples.

\begin{equation}~\label{eq:latency hiding}
\mathcal{S}^{h} = \frac{T}{T - M^{L}}
\end{equation}

\begin{figure}[htbp]
\centering
\includegraphics[width=0.35\linewidth]{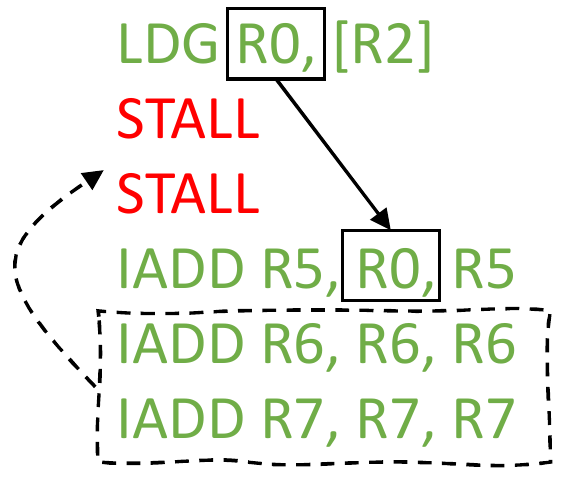}
\caption{The mental model of latency hiding optimizers.
Green code represents active samples, and red code represents latency samples.
Latency hiding optimizers consider the effect of moving the code enclosed in dashed lines to fill stall slots.}
\label{fig:reorder}
\end{figure}

Equation~\ref{eq:latency hiding} models the execution at the kernel level.
In practice, however, not all $M^{L}$ can be eliminated by rearranging code.
Figure~\ref{fig:reorder} explains the mental model of latency hiding optimization.
We derive Equation~\ref{eq:latency hiding improve} to refine the estimate of $\mathcal{S}^{h}$, where $A$ denotes the total number of active samples.

\begin{equation}~\label{eq:latency hiding improve}
\mathcal{S}^{h} = \frac{T}{T - Min(A, M^{L})}
\end{equation}

We prove that the upper bound of $\mathcal{S}^{h}$ is two. We use $L$ to denote the total number of latency samples, and $T=A + L$.

\begin{theorem}~\label{th:speedup}
The speedup upper bound of latency hiding optimizations is 2$\times$.
\end{theorem}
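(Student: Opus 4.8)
The plan is to bound $\mathcal{S}^{h}$ directly from Equation~\ref{eq:latency hiding improve}, using only three elementary facts: that the number of matched latency samples satisfies $M^{L} \le L$, that $Min(A, M^{L}) \le A$ trivially, and that $T = A + L$ with $A, L \ge 0$. First I would observe that the right-hand side of Equation~\ref{eq:latency hiding improve} is monotonically increasing in $Min(A, M^{L})$, so replacing $M^{L}$ by its upper bound $L$ only increases the estimate; hence $\mathcal{S}^{h} \le T / \bigl(T - Min(A, L)\bigr)$. This reduces the whole claim to bounding $Min(A, L)$ from above.

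The key step is the inequality that the minimum of two nonnegative numbers never exceeds their arithmetic mean: $Min(A, L) \le (A + L)/2 = T/2$. Substituting this into the denominator gives $T - Min(A, L) \ge T/2$, and therefore $\mathcal{S}^{h} \le T/(T/2) = 2$. (The denominator is strictly positive whenever $T > 0$, so no division-by-zero case arises.) I would then note that the bound is tight: taking $A = L$, i.e. $A = T/2$, together with an optimizer that matches every latency sample ($M^{L} = L$) yields $Min(A, M^{L}) = T/2$ and hence $\mathcal{S}^{h} = 2$ exactly, so $2\times$ is the least upper bound rather than merely an upper bound.

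I expect the main obstacle to be conceptual rather than computational: the inequality itself is a one-liner, so the real content lies in justifying that $Min(A, M^{L})$ is the correct quantity to cap, i.e. that it faithfully models ``the latency that reordering can hide.'' The intuition — already captured by Figure~\ref{fig:reorder} — is that stall slots can only be filled by instructions that would otherwise issue in active slots, and there are at most $A$ such slots, so no reordering can convert more than $Min(A, M^{L})$ latency samples into useful work. Once that modeling assumption is granted, the proof needs no case analysis or heavy algebra.
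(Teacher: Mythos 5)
Your proof is correct, and it takes a cleaner route than the paper's. The paper splits into two cases according to whether $Min(A, M^{L})$ equals $A$ or $M^{L}$: in the first case it writes $\frac{T}{T-A} = 1 + \frac{A}{L}$ and uses $A \leq M^{L} \leq L$; in the second it writes $\frac{T}{T-M^{L}} = \frac{1}{1 - M^{L}/(A+L)}$ and uses $M^{L} \leq A$ and $M^{L} \leq L$ to get $\frac{M^{L}}{A+L} \leq \frac{1}{2}$. You collapse both cases into the single chain $Min(A, M^{L}) \leq Min(A, L) \leq \frac{A+L}{2} = \frac{T}{2}$, using monotonicity of $x \mapsto T/(T-x)$ and the fact that a minimum never exceeds the arithmetic mean; the casework disappears entirely. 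You also add two things the paper omits: an explicit tightness witness ($A = L$, $M^{L} = L$ gives $\mathcal{S}^{h} = 2$ exactly, so $2\times$ is the least upper bound), and a remark that the real content is the modeling assumption behind Equation~\ref{eq:latency hiding improve} --- that at most $Min(A, M^{L})$ latency samples can be absorbed into active slots --- rather than the arithmetic. Both observations are worthwhile; the paper treats that modeling step as given (via Figure~\ref{fig:reorder}) just as you do, so there is no gap on either side.
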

\begin{proof}
\begin{itemize}
\item If $Min(A, M^{L}) = A$. $\frac{T}{T - A} = \frac{L+A}{(L+A) - A} = 1 + \frac{A}{L}$. \\
Because $A\leq M^{L} \leq L, \frac{T}{T - Min(A, M_{L})} \leq 2$.
\item If $Min(A, M^{L}) = M^{L}$. $\frac{T}{T-M^{L}} = \frac{1}{1 - \frac{M^{L}}{T}} = \frac{1}{1 - \frac{M^{L}}{A + L}}$.\\
Because $L \geq M^{L}$ and $A \geq M^{L}$, $\frac{M^{L}}{A+L} \leq \frac{1}{2}$.\\
Then $\frac{T}{T - Min(A, M^{L})} \leq 2$.
\end{itemize}
\end{proof}

\paragraph{Scope Analysis}

We observe that optimizations such as loop unrolling only arrange code for a specific scope so that only the active samples within the scope can be used to reduce latency samples.
Based on this limitation, we propose Equation~\ref{eq:latency hiding scope} to analyze optimization scopes representing loops and functions.
$\mathcal{S}^{h}_{l}$ indicates the speedup for a specific scope $l$, and $M^{L}_{l}$ is the matched latency samples for a scope $l$.

\begin{equation}~\label{eq:latency hiding scope}
\mathcal{S}^{h}_{l} = \frac{T}{T - Min(\sum\limits_{l' \in nested(l)} A_{l'}, M^{L}_{l})}
\end{equation}

Suppose we have a loop \textit{loop1} nested in another loop \textit{loop2}, the speedup of of \textit{loop2} is bounded by the active samples of \textit{loop2} and \textit{loop1} according to Equation~\ref{eq:latency hiding scope}.

\subsubsection{Parallel Optimization Estimator}

Parallel optimizers adjust the number of blocks and threads within each block to change the parallelism level.
To estimate the effect of adjusting blocks and threads, we take into account each warp scheduler's change of active warps--$\mathcal{C}_{W}$ (Equation~\ref{eq:warp ratio}) and change of issue rate---$\mathcal{C}_{\mathcal{I}}$ (Equation~\ref{eq:issue ratio}) .

For instance, by increasing the number of blocks, we reduce the active warps per scheduler and $\mathcal{C}_{W}$ is less than one.
If the number of threads of each block is reduced, the rate that a warp scheduler is issuing is reduced, and $\mathcal{C}_{\mathcal{I}}$ is less than one.

\begin{equation}~\label{eq:warp ratio}
\mathcal{C}_{W} = \frac{W_{new}}{W}
\end{equation}

\begin{equation}~\label{eq:issue ratio}
\mathcal{C}_{\mathcal{I}} = \frac{\mathcal{I}_{new}}{\mathcal{I}}
\end{equation}

Assuming every warp scheduler's issue rate is the same across different SMs, we derive Equation~\ref{eq:parallel active} and Equation~\ref{eq:parallel active new} to calculate $\mathcal{I}$ and $\mathcal{I}_{new}$ respectively, where $R_{I}$ is the ratio of issued samples among all samples.
A warp scheduler is issuing if at least one warp on the scheduler is ready to issue an instruction.

\begin{equation}~\label{eq:parallel active}
\mathcal{I} = 1 - (1 - R_{I})^{W}
\end{equation}

\begin{equation}~\label{eq:parallel active new}
\mathcal{I}_{new} = 1 - (1 - R_{I})^{{W}_{new}}
\end{equation}

\begin{equation}~\label{eq:parallel speedup}
\mathcal{S}^{p} = \frac{1}{\mathcal{C}_{W}} \times \mathcal{C}_{\mathcal{I}} \times f
\end{equation}

Based on $\mathcal{C}_{W}$ and $\mathcal{C}_{\mathcal{I}}$, we estimate the speedup of parallel optimizations ($\mathcal{S}^{p}$) using Equation~\ref{eq:parallel speedup}, where $f$ is a factor that varies between optimizers.
Some optimizers may assume there is no pipeline, memory throttle, and no select stall if we reduce the number of active warps per block to a certain number (e.g., less than the number of schedulers per SM).

\section{Evaluation}~\label{sec:Evaluation}
\begin{table*}[htbp]
\centering
\caption{Achieved speedups averaged among ten runs. We improved each code according to the suggestion provided by GPA. Estimate error is computed by $\frac{|Estimated\ Speedup - Achieved\ Speedup|}{Achieved\ Speedup}\times 100\%$}
\scriptsize
\begin{tabular}{|c|c|c|c|c|c|c|}
\hline
\textbf{Application}   & \textbf{Kernel}               & \textbf{Optimization}        & \textbf{Original} & \textbf{Achieved Speedup} & \textbf{Estimated Speedup} & \textbf{Error} \\ \hline
rodinia/backprop       & bpnn\_layerforward\_CUDA      & Warp Balance                 & 17.26$\pm$0.21us           & 1.15$\pm$0.03$\times$            & 1.14$\times$                     & 1\%            \\ \hline
rodinia/backprop       & bpnn\_layerforward\_CUDA      & Strength Reduction           & 15.06$\pm$0.13us           & 1.21$\pm$0.01$\times$            & 1.24$\times$                     & 2\%            \\ \hline
rodinia/bfs            & Kernel                        & Loop Unrolling               & 567.14$\pm$2.04us          & 1.12$\pm$0.01$\times$            & 1.59$\times$                     & 42\%           \\ \hline
rodinia/b+tree         & findRangeK                    & Code Reorder                 & 51.59$\pm$0.39us           & 1.16$\pm$0.1$\times$            & 1.28$\times$                     & 10\%           \\ \hline
rodinia/cfd            & cuda\_compute\_flux           & Code Reorder                & 190.81$\pm$2.98ms          & 1.60$\pm$0.02$\times$            & 1.68$\times$                     & 5\%            \\ \hline
rodinia/gaussian       & Fan2                          & Thread Increase              & 105.76ms          & 3.58$\pm$0.15$\times$            & 3.32$\times$                     & 7\%           \\ \hline
rodinia/heartwall      & kernel                        & Loop Unrolling               & 94.06$\pm$2.79ms           & 1.17$\pm$0.03$\times$            & 1.18$\times$                     & 1\%            \\ \hline
rodinia/hotspot        & calculate\_temp               & Strength Reduction           & 11.92$\pm$0.08us           & 1.14$\pm$0.01$\times$            & 1.09$\times$                     & 4\%            \\ \hline
rodinia/huffman        & vlc\_encode\_kernel\_sm64huff & Warp Balance                 & 123.40$\pm$0.28us          & 1.07$\times$           & 1.17$\times$                     & 9\%            \\ \hline
rodinia/kmeans         & kmeansPoint                   & Loop Unrolling               & 784.41$\pm$4.81us          & 1.11$\pm$0.01$\times$            & 1.20$\times$                     & 8\%            \\ \hline
rodinia/lavaMD         & kernel\_gpu\_cuda             & Loop Unrolling               & 4.04$\pm$0.04ms            & 1.11$\pm$0.03$\times$            & 1.12$\times$                     & 1\%            \\ \hline
rodinia/lud            & lud\_diagonal                 & Code Reorder                 & 218.33$\pm$0.11us          & 1.41$\times$            & 1.48$\times$                     & 5\%            \\ \hline
rodinia/myocyte        & solver\_2                     & Fast Math                    & 308.55$\pm$6.87ms          & 1.22$\pm$0.03$\times$            & 1.13$\times$                     & 7\%            \\ \hline
rodinia/myocyte        & solver\_2                     & Function Splitting            & 258.09$\pm$0.27ms          & 1.02$\pm$0.01$\times$            & 1.01$\times$                     & 1\%            \\ \hline
rodinia/nw             & needle\_cuda\_shared\_1       & Warp Balance                 & 839.11$\pm$0.80us          & 1.07$\pm$0.01$\times$            & 1.09$\times$                     & 2\%            \\ \hline
rodinia/particlefilter & likelihood\_kernel            & Block Increase               & 2.05$\pm$0.02ms            & 1.75$\pm$0.01$\times$            & 1.92$\times$                     & 10\%            \\ \hline
rodinia/streamcluster  & kernel\_compute\_cost         & Block Increase               & 20.73$\pm$0.28ms           & 1.52$\pm$0.03$\times$            & 1.35$\times$                     & 11\%            \\ \hline
rodinia/sradv1         & reduce                        & Warp Balance                 & 1.94$\pm$0.28ms            & 1.02$\pm$0.01$\times$            & 1.10$\times$                     & 8\%           \\ \hline
rodinia/pathfinder     & dynproc\_kernel               & Code Reorder                 & 94.60$\pm$0.68us           & 1.04$\pm$0.01$\times$            & 1.31$\times$                     & 26\%           \\ \hline
Quicksilver            & CycleTracking\_Kernel           & Function Inlining            & 50.48$\pm$0.70s             & 1.14$\pm$0.01$\times$            & 1.18$\times$                     & 4\%            \\ \hline
Quicksilver            & CycleTracking\_Kernel           & Register Reuse               & 50.07$\pm$0.86s             & 1.02$\pm$0.01$\times$            & 1.04$\times$                     & 2\%            \\ \hline
ExaTENSOR              & tensor\_transpose             & Strength Reduction           & 5.60$\pm$0.02ms            & 1.11$\pm$0.01$\times$            & 1.06$\times$                     & 5\%            \\ \hline
ExaTENSOR              & tensor\_transpose             & Memory Transaction Reduction & 5.07$\pm$0.01ms            & 1.03$\times$            & 1.05$\times$                     & 2\%            \\ \hline
PeleC	& react\_state &	Block Increase & 121.60$\pm$1.05s & 1.21$\pm$0.01$\times$ & 1.23$\times$ & 2\% \\ \hline
Minimod	& target\_pml\_3d	  & Fast Math	   &  88.88$\pm$1.14ms & 1.03$\pm$0.01$\times$ & 1.09$\times$ & 6\% \\ \hline
Minimod	& target\_pml\_3d	  & Code Reorder   & 85.99$\pm$0.06ms & 1.04$\pm$0.01$\times$ & 1.05$\times$ & 1\% \\ \hline \hline
average & & & & 1.22$\times$ & 1.26$\times$ & 4.1\% \\ \hline
\end{tabular}
\label{tab:speedup}
\end{table*}

We evaluated GPA on an x86\_64 system with two Intel E5-2695 processors and a single NVIDIA Volta v100 GPU. 
The following system software are used: Linux 3.10.0, NVIDIA CUDA Toolkit 11.0.194, NVIDIA Driver 450.51.06, and GCC 7.3.0.
We evaluated GPA on Rodinia benchmarks and applications described below:

\begin{itemize}
\item Quicksilver~\cite{quicksilver} is a proxy application that solves a dynamic Monte Carlo particle transport problem.
Quicksilver has a single large kernel that invokes many device functions consisting of thousands of lines of code.
We studied Quicksilver with its default input.
\item ExaTENSOR~\cite{exatensor} is a library for large-scale numerical tensor algebra.
We studied its tensor transpose kernel using a large six-dimensional tensor.
\item PeleC~\cite{pelec} is an application for reacting flows using adaptive-mesh compressible hydrodynamics.
We studied PeleC using its default input.
\item Minimod~\cite{jie2020minimod} is a benchmark application for seismic modeling.
We analyzed its higher-order stencil codes using grid sizes of $100^{3}$.
\end{itemize}

Each row in Table~\ref{tab:speedup} quantifies the speedup we achieved by applying the corresponding optimization suggested by GPA.
For each benchmark, we focused on the dominant GPU kernel and implemented one of the top five optimization suggestions, based on its estimated speedup and ease of implementation.
On average, we achieved a geometric mean of 1.22$\times$ speedup with individual speedups ranging from 1.01$\times$ to 3.53$\times$.
GPA's estimated speedup is close to the speedup we achieved, with a geometric mean of the gap between the speedup we achieved and the estimated speedup of 4.1\%.
In the rest of this section, we describe observations while analyzing and optimizing benchmarks using GPA, including the optimization workflow, false positivity, and single dependency coverage.

\subsection{Optimization Workflow}
Before using GPA, one can apply a source-to-source transformation to separate variables that appear on a single line.
Then, one can start by interpreting the top optimizations in the advice report by GPA.
Not all optimizations are easy to implement.
For example, for a code reordering suggestion, if the distance between the def and use instructions is long, it is hard to improve it further.
Based on our experience of studying benchmarks, one can investigate the problem, modify the code, and achieve speedup within half an hour.
Typically, only a few lines need to be changed to achieve non-trivial speedups.

\subsection{False Positivity}
GPA could overestimate optimization opportunities. 
From Table~\ref{tab:speedup}, we observe that loop unrolling and code reordering optimizations have the highest estimate errors.

The overestimation of the benefits of loop unrolling occurs because the loop unrolling optimizer lacks information about the number of iterations and compiler information.
After closely investigating the \textit{bfs} benchmark, we found that the workload is highly unbalanced such that most threads only execute less than four iterations of the loop.
Thus, loop unrolling benefits only a small number of threads.

The data dependency restriction causes the false positivity of code reordering optimization.
GPA suggests reordering a global memory read in a loop of the \textit{pathfinder} benchmark.
The estimated speedup is 26\% higher than we achieved because instructions after synchronizations depend on the results before synchronizations. 
Therefore, the instructions we can use to hide latency are limited in a fine-grained scope in which the distance between the dependent instruction pairs is short no matter how we arrange instructions.

\subsection{Single Dependency Coverage}

\begin{figure*}[htbp]
\centering
\includegraphics[width=0.82\linewidth]{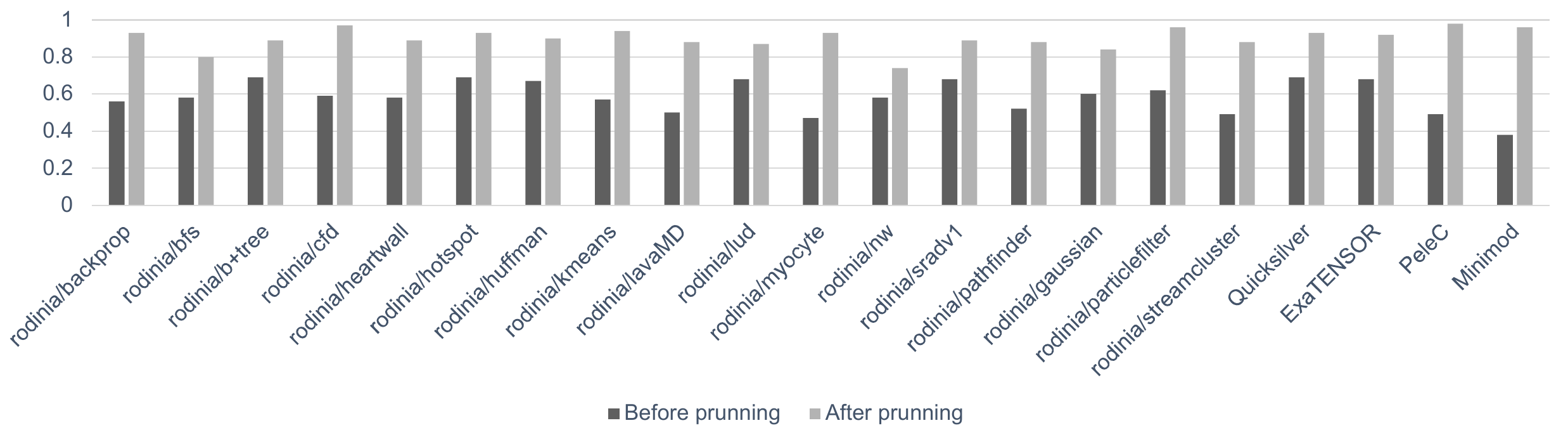}
\caption{Single dependency coverage before and after pruning cold edges}
\label{fig:single dependency}
\end{figure*}

In the instruction dependency graph, we say a node is a \textit{single dependency} node if the node does not have any incoming edge, or each incoming edge represents a different dependency.
We define \textit{single dependency coverage} as the ratio of \textit{single dependency} nodes to the total number of nodes.
Figure~\ref{fig:single dependency} quantifies the single dependency coverage before and after pruning cold edges.
After applying edge pruning heuristics, most benchmarks have single dependency coverage greater than 0.8 so that we can attribute the stalls to one edge without apportioning.

Two exceptions are the \textit{bfs} and the \textit{nw} benchmarks.
The \textit{bfs} benchmark is memory-intensive. Most of the instructions are global memory read/stores, which have a 64-bit memory address stored in two 32-bit registers.
The \textit{nw} benchmark has many nodes with multiple incoming edges because of its intricate control flow.
The dominant loop in \textit{nw} is fully unrolled. Within the loop, there is a condition that decides if values are calculated or not.
If yes, it compares four candidates and chooses the maximum one.

\section{Case Studies}~\label{sec:Case Studies}
In this section, we study the optimizations for the four larger benchmark codes in Table~\ref{tab:speedup},  including ExaTENSOR, Quicksilver, PeleC, and Minimod on the platform we mentioned in Section~\ref{sec:Evaluation}.
The GPU code of the applications was compiled with \texttt{-O3 -lineinfo}.
With the following case studies, we show that one can achieve non-trivial speedup without in-depth knowledge of the assembly code and the GPU architecture.

\subsection{ExaTENSOR}
\begin{figure}[htbp]
\centering
\includegraphics[width=1.05\linewidth]{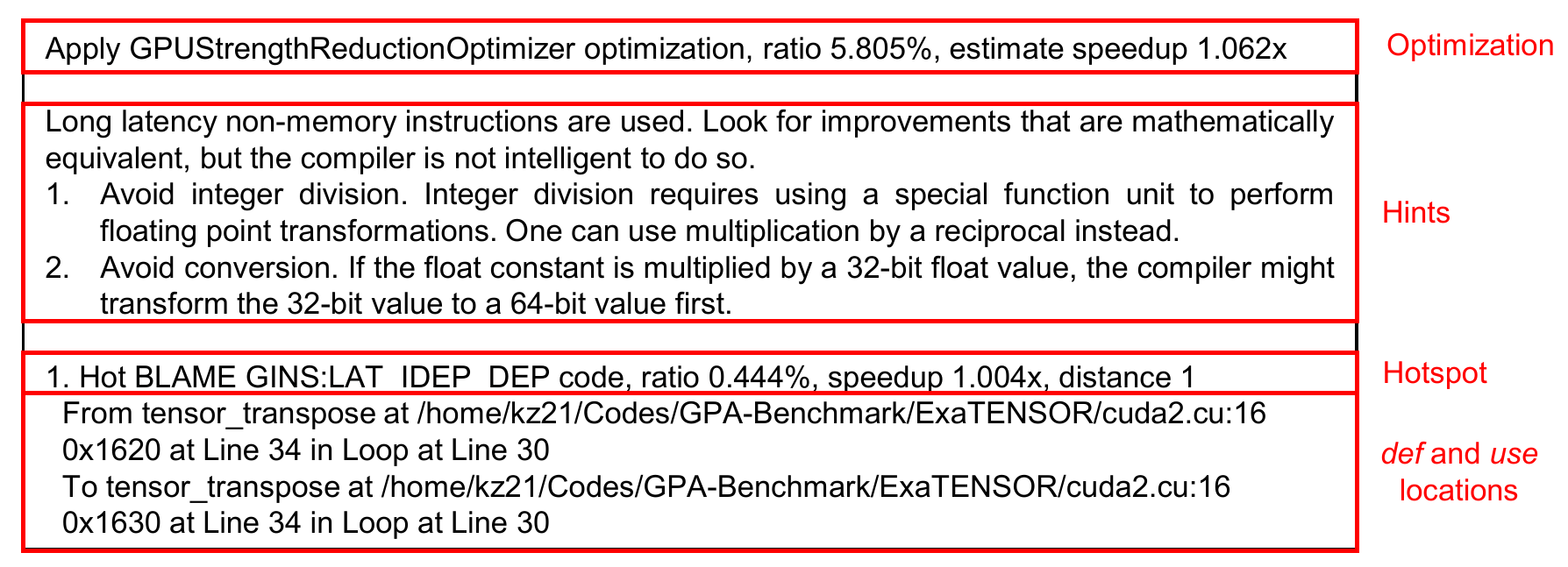}
\caption{A performance report for ExaTENSOR}
\label{fig:gpa report}
\end{figure}

We studied a tensor transpose kernel in ExaTENSOR.
We show a part of GPA's report in Figure~\ref{fig:gpa report}.
GPA ranks optimizers based on their estimated speedups.
Each optimizer suggests a few methods to modify the code and lists several hotspots to focus on.
Each hotspot consists of the \textit{def} and \textit{use} locations and their distance.
In Figure~\ref{fig:gpa report}, GPA reports that we can follow the suggestions of the strength reduction optimizer.
Because the hotspot code performs an integer division, we can replace it with a multiplication by its reciprocal.
This optimization renders a 1.11$\times$ speedup.

We analyzed the modified code again with GPA.
This time GPA suggests a memory transaction reduction optimization to mitigate memory throttling stalls.
In particular, GPA suggests that we replace global memory reads by constant memory reads if elements are shared between threads and not changed during execution.
According to the suggestion, we achieved a 1.03$\times$ speedup.

\subsection{Quicksilver}
We used GPA to analyze Quicksilver on a single GPU.
GPA reports the function inlining optimization may render the highest speedup.
Applying the \texttt{always\_inline} keyword for these functions fails to inline due to the size/register limitation forced by the compiler.
Therefore, we manually inlined two small functions by integrating the whole function bodies into their callers.
By modifying the code in this way, we obtained a $1.14\times$ speedup.

Next, GPA's register reuse optimizer indicates local memory stalls in a loop and points out the potential cause of register spilling.
GPA suggests splitting the loop into two to save registers.
Without GPA, the raw PC sampling report by other tools only show global memory stalls without identifying register pressure.
Applying the optimization yields a $1.01\times$ speedup.

\subsection{PeleC}
We studied the \texttt{react\_state} kernel of PeleC. 
GPA estimates the code reordering optimization may result in the highest speedup.
However, because the top five hotspots only account for 4~\% all of the matched stalls, there are many hotspots distributed across lines so that it is difficult to adjust the code manually.
The second best optimizer suggests increasing the number of blocks.
Since the kernel only occupies 16 blocks, GPA suggests reducing the number of threads per block while increasing the number of blocks to improve the parallelism. 
By increasing the number of blocks to 32, we achieved a 1.21$\times$ speedup.

\subsection{Minimod}
We applied GPA to analyze the \texttt{target\_pml\_3d} kernel of Minimod, which performs higher-order multi-statement stencil computations.
GPA first suggests using the fast math functions to replace high precision match functions.
We applied the \texttt{--use\_fast\_math} compiler flag to achieve a 1.03$\times$ speedup.

Next, GPA suggests the code reordering optimizations for the updated code.
Adjusting the code to read subscripted values from global memory well in advance of their use hides more of the memory latency and yields an additional 1.04$\times$ speedup. 

\section{Related Work}~\label{sec:Related Work}
GPU profilers are widely available in various GPU architectures.
NVIDIA provides several tools~\cite{nvprof, nsightsystem, nsightcompute} to measure GPU performance metrics.
Intel develops VTune~\cite{reinders2005vtune} to monitor executions on both CPUs and GPUs.
AMD provides ROCProfiler~\cite{rocprofiler} to read hardware counters and trace applications.
There are also tools~\cite{hpctoolkit,shende2006tau,scorep,cudablamer-thesis18,cudablamer-protools19} that focus on large HPC applications.
Among the above tools, NVIDIA's nsight-compute provides the most information at the GPU kernel level.
It characterizes GPU kernels' bottlenecks at the high level but does not pinpoint bottlenecks and provide suggestions for specific code regions.
In contrast, GPA analyzes control flow, program structure, architectural features, and interprets measurement data to provide thorough suggestions and estimate speedups.

\sloppy
GPU vendors have also developed instrumentation tools~\cite{kambadur2015fast, stephenson2015flexible, villa2019nvbit, sanitizer} for fine-grained performance measurement and analysis.
These tools, however, introduce unavoidable overhead for GPU kernels.
GPA adopts PC sampling~\cite{cuptipcsampling}, which introduces considerably less cost for kernel execution.
There have been efforts that use instrumentation methods to diagnose specific types of inefficiencies.
Yeh et al. ~\cite{braun2019cuda} instrument GPU code as it is generated by LLVM to identify redundant instructions.
CUDAAdvisor~\cite{shen2018cudaadvisor} also instruments code as it is generated by LLVM to monitor GPU memory access and decide if bypassing could be used.
GVProf~\cite{gvprof} instruments GPU binaries to detect both temporal and spatial redundant value patterns.
These tools only identify a particular type of inefficiencies and do not correlate the problem with hotness.
In comparison, GPA performs a comprehensive analysis of stall reasons for instruction samples and derives various optimization suggestions for hot code regions.

On the CPU side, there exist several tools that examine code quality and provide optimization suggestions.
PerfExpert~\cite{burtscher2010perfexpert} collects performance metrics using sampling, analyzes measurement data and system parameters, and estimates performance upper-bounds.
AutoScope~\cite{sopeju2011autoscope} extends PerfExpert to suggest optimization strategies based on the detected bottlenecks.
Unlike these two tools, CQA~\cite{charif2014cqa} builds a static model by emulating processor pipelines to check symptoms (e.g., vectorization) on the loop level.
VTune~\cite{yasin2014top} uses structured guidance to characterize the bottlenecks by interpreting performance counters. 

Profile-guided optimization takes measurement data as input to guide compiler perform code transformation.
Practical Path Profiling (PPP)~\cite{bond2005practical} collects edge profiles using instrumentation to help compilers make decisions about function inlining and loop unrolling.
Instrumentation-based methods require using representative inputs to dump meaningful profiles.
To avoid the overhead of instrumentation-based approaches, AutoFDO~\cite{chen2016autofdo} uses hardware counter based sampling to collect profiles for production applications and use the profiles to guide optimizations.
While most profile-guided optimization tools attribute measurement data to source lines to provide feedback for compilers, BOLT~\cite{panchenko2019bolt} is a post link optimizer that attributes samples on machine instructions and uses this information to derive binary optimizations.
Recently, there also has been research that incorporates machine learning to guide optimizations.
Cavazos et al.~\cite{cavazos2007rapidly} use profile data as input features to a regression model that predicts the best compiler flags.
DeepFrame~\cite{guha2019deepframe} incorporates deep learning methods to learn the most likely paths during execution and offload the regions to FPGAs.
Though profiler-guided optimizations can automatically adjust code based on rules or models, they only cover a subset of all the available optimizations.
Many optimizations on GPUs need manual effort, such as warp balance, memory coalescing, and adjustments to the thread counts.
Unlike other tools, GPA depends only on line-mapping information and is not tied to any specific compiler.

\section{Conclusions and Future Work}~\label{sec:Conclusion}

Tuning GPU kernels is difficult due to the complexity of GPU architectures and programming models.
To free application developers from needing to interpret measurements from multiple performance counters and analyze program inefficiencies, we introduce GPA.
This performance advisor provides insightful optimization advice at the levels of lines, loops, and kernels and estimates each optimization's speedup.
GPA is organized in a modular fashion.
Users can add custom optimizers to match other inefficiency patterns (e.g., texture fetch combination).

GPA suffers from both hardware and software limitations.
First, GPA apportions stalls to multiple dependency sources with an approximation method based on the instruction counts in the paths.
If the underlying hardware implements ``paired sampling''~\cite{dean1997profileme}, we could collect precisely both the stalled instruction and the instruction that causes the stall.
Second, to obtain a more accurate speedup estimate, comprehensive compiler information such as loop unroll count should be considered.
Last, because PC Sampling with NVIDIA's CUPTI serializes kernel executions, GPA's profiler is unable to measure the effect of concurrent kernel execution.

In the future, we plan to ingest compiler information into GPA to perform a more accurate estimate.
In addition, we can use the insights derived from GPA to guide compilers to apply code transformation for large-scale applications with hundreds of tiny hotspots.

\bibliographystyle{ACM-Reference-Format}


\end{document}